\newcommand\copyrighttext{%
  \footnotesize \textcopyright 2025 IEEE. Personal use of this material is permitted.
  Permission from IEEE must be obtained for all other uses, in any current or future
  media, including reprinting/republishing this material for advertising or promotional
  purposes, creating new collective works, for resale or redistribution to servers or
  lists, or reuse of any copyrighted component of this work in other works.}
\newcommand\copyrightnotice{%
\begin{tikzpicture}[remember picture,overlay]
\node[anchor=south,yshift=10pt] at (current page.south) 
  {\fbox{\parbox{\dimexpr\textwidth-\fboxsep-\fboxrule\relax}{\copyrighttext}}};
\end{tikzpicture}%
}
\newcounter{thm}
\newtheorem{theorem}[thm]{Theorem}
\newtheorem{definition}[thm]{Definition}
\newtheorem{lemma}[thm]{Lemma}
\newcommand*{\rom}[1]{\uppercase\expandafter{\romannumeral #1\relax}}
\newcommand{\braket}[2]{{|#1\rangle\langle#2|}}
\newcommand{\tr}{{\mathrm{tr}}}
\newcommand{\eins}{{\mathbf{1}}}
\def\BibTeX{{\rm B\kern-.05em{\sc i\kern-.025em b}\kern-.08em T\kern-.1667em\lower.7ex\hbox{E}\kern-.125emX}}
\begin{document}

\title{Achievable Identification Rates in Noisy Bosonic Broadcast Channels
}

\author{
   \IEEEauthorblockN{Zuhra Amiri\IEEEauthorrefmark{1}, Janis N\"otzel\IEEEauthorrefmark{1}}
   \IEEEauthorblockA{
       \IEEEauthorrefmark{1}Emmy-Noether Group Theoretical Quantum Systems Design Lehrstuhl f\"ur Theoretische Informationstechnik,\\ Technische Universit\"at M\"unchen\\ 
  \{zuhra.amiri,  janis.noetzel\}@tum.de 
    }}

\maketitle
\copyrightnotice

\begin{abstract}
Identification in quantum communication enables receivers to verify the presence of a message without decoding its entire content. While identification capacity has been explored for classical and finite-dimensional quantum channels, its behaviour in bosonic systems remains less understood. This work analyses identification over noisy bosonic broadcast channels using coherent states. We derive achievable identification rate regions while ensuring error probabilities remain bounded, even in an infinite-dimensional setting. Our approach leverages quantum hypothesis testing and approximates the infinite sender alphabet with discrete subsets to maintain power constraints.
\end{abstract}

\begin{IEEEkeywords}
Identification, Noisy Bosonic Broadcast Channel
\end{IEEEkeywords}

\section{Introduction}

Traditional communication focuses on transmitting messages, whether utilising classical or quantum methods.
 In contrast, identification (ID) serves a different purpose. Rather than focusing on decoding an entire message, identification assesses whether the sender has transmitted a specific message \cite{ahlswededueck, ahlswede2021identification, derebeyouglu2020performance}. This distinction is crucial when the receiver does not need to know the entire message but only verify its presence. Identification is efficient in IoT, sensor networks, and medical or e-health applications to confirm whether a critical event has been detected \cite{idreview}. Furthermore, identification codes can provide an additional layer of security in situations where traditional cryptographic methods might not be available by identifying messages or devices \cite{günlü2021doubly}.

Due to their flexibility and efficiency, identification codes are the backbone of many cutting-edge technologies today. For example, in radiofrequency identification systems, these codes enable smart tags and labels to accurately and reliably pinpoint physical objects \cite{chen2019information,gabsi2021novel}. In cellular networks, these codes facilitate smooth and rapid connections, identifying user devices with minimal communication overhead, ensuring a frictionless user experience \cite{agiwal2018directional}. Moreover, in the domain of digital twin systems, identification codes play a crucial role in maintaining synchronisation between virtual models and their physical counterparts \cite{von2023identification}.

Previous works have established the foundations of classical and quantum identification. Ahlswede and Dueck introduced the concept in the classical setting, while Winter and others started to extended it to quantum channels \cite{ahlswededueck, ahlswede2001strong, loeber1999quantum}. Recent works have also started more practical noise scenarios \cite{identificationOverFading} and investigated deterministic coding strategies \cite{deterministic} as well as realized implementations \cite{designGuide,ferraraImplementation}. The identification capacity is equal to the transmission capacity for single-user channels, though the units differ. In particular, the size of the ID code increases doubly exponentially with block length, assuming the encoder has access to a source of randomness. As a result, identification codes provide an exponential throughput advantage over transmission codes by allowing the encoding and decoding sets to overlap.

This advantage becomes even more significant in the context of bosonic channels used in optical communication. Given the widespread use of bosonic channels in optical communication, understanding their identification capacity is crucial for practical quantum networks.
In modern communication systems, e.g., wireless networks, satellite communications, and multicast services, broadcast channels are pivotal. These channels have a single transmitter that transmits information to two or more receivers. Unlike traditional point-to-point channels, broadcast channels introduce the challenge of efficiently managing resources to cater to the needs of multiple receivers, each with potentially different channel conditions. In \cite{guha2007classical}, the authors studied the transmission capacity of a lossy bosonic broadcast channel. Identification codes have been studied for lossy bosonic broadcast channels in \cite{prd}. Advancing upon this previous work, we model the most common type of noise in the quantum optics setting, which is given by additive thermal noise.
In bosonic quantum systems, identification is particularly challenging. Unlike finite-dimensional quantum channels, bosonic channels operate in infinite-dimensional Hilbert spaces, making identification capacity analysis more complex. Moreover, the study of noisy quantum systems is both practically important and theoretically challenging since the involved states are inevitably given by density matrices and not by vectors anymore. To determine achievable identification rates, we must ensure that error probabilities remain bounded even with an infinite sender alphabet even under such additional challenge.

This work analyses the identification of noisy bosonic channels using coherent states. We derive achievable rate regions and show that identification remains feasible even in an infinite-dimensional setting under power constraints. We approximate the infinite sender alphabet with discrete subsets and apply quantum hypothesis testing to ensure that the error probabilities remain bounded. Our results extend previous findings on quantum identification and provide new insights into the capabilities of bosonic communication systems.

\section{Notation and Channel Model}
We summarise the key mathematical notation and conventions used throughout this work.

We write the identity operator as $\mathbf{1}$ and the trace of an operator $A$ as $\tr(A)$. The trace norm is given as $\|A\|_1 = \tr\sqrt{A^\dagger A}$. The logarithm $\log$ is taken to base two.
We denote the Hilbert space corresponding to a quantum system $A$ as $\mathcal{H}_A$. The set of all density operators $\rho$ on $\mathcal{H}_A$ is denoted as $\mathscr{D}(\mathcal{H}_A)$. A measurement on a quantum system can be defined as a Positive Operator Valued (Probability) Measure (POVM) $\{\Pi_j\}$ satisfying $\Pi_j \geq 0$, $\sum_j\Pi_j = \mathbf{1}$. The quantum entropy of a density operator $\rho$ is defined as $H(\rho) = -\tr[\rho\log(\rho)]$. The quantum entropy of a bipartite state $\sigma_{AB}$ on $\mathcal{H}_A \otimes \mathcal{H}_B$ is defined as $I(A;B)_\sigma = H(\sigma_A) + H(\sigma_B) - H(\sigma_{AB})$. We define the Holevo quantity of an ensemble $\{p_x, \rho_x\}_{x \in \mathcal{X}}$ as $\chi(\{p_x, \rho_x\}_{x \in \mathcal{X}})$. We define the Fock space denoted as $\mathcal F=\mathrm{span}(\{|k\rangle\}_{k\in\mathbb N})$ and for every $L\in\mathbb N$ the finite-dimensional subspace $\mathcal F_L=\mathrm{span}(\{|k\rangle\}_{k=0}^{L-1})$.

We consider a quantum broadcast channel $\mathcal{N}_{A\to B_1 B_2}$ with one sender $A$ and two receivers $B_1$ and $B_2$. Mathematically, it is represented as a completely positive trace-preserving (CPTP) map $\mathcal{N}_{A\to B_1 B_2}: \mathscr{D}(\mathcal{H}_A) \to  \mathscr{D}(\mathcal{H}_{B_1} \otimes \mathcal{H}_{B_2})$ \cite{prd}. The transmission occurs over a noisy bosonic channel with transmissivity $0 \leq\tau\leq1$ and noise levels $N_i$, $i \in \{1,2\}$ at receiver $i$, respectively. The input-output relations for the two receivers can be shown with the beam splitter model
    \begin{align}
        \hat{b}_1 &= \sqrt{\tau}\hat{a} + \sqrt{1-\tau}\hat{e}\\
        \hat{b}_2 &= \sqrt{1-\tau}\hat{a}-\sqrt{\tau}\hat{e},
    \end{align}
where $\hat{e}$ is the environmental noise. 
Furthermore, the channel is memoryless, meaning if the systems $A^n = (A_1, \ldots, A_n)$ are sent through $n$ channel uses, then the input state $\rho_{A^n}$ goes through the tensor product mapping $\mathcal{N}_{A^n\to B_1^n B_2^n} = \mathcal{N}^{\otimes n}_{A\to B_1 B_2}$. The marginal channels are defined as follows: for receiver 1, $\mathcal{N}^{(1)}_{A \to B_1}(\rho_A) = \tr_{B_2}(\mathcal{N}_{A \to B_1, B_2}(\rho_A))$, and similarly, for receiver 2 $\mathcal{N}^{(2)}_{A \to B_2}(\rho_A) = \tr_{B_1}(\mathcal{N}_{A \to B_1, B_2}(\rho_A))$. 

We assume that the sender transmits coherent states 
\begin{align}
    |\alpha\rangle = e^{-|\alpha|^2/2}\sum_{r=0}^\infty\frac{\alpha^r}{\sqrt{r!}}|r\rangle
\end{align}
with $\alpha \in \mathbb{C}$.
Assuming that there is some leakage of the emitted field into the environment and taking into account that the receivers only ever apply individual measurements and never collective ones, the channel can then be written as a classical-quantum channel
\begin{align}
    \mathcal N(\alpha)=S_{N_1}(\sqrt{\tau_1}\alpha)\otimes S_{N_2}(\sqrt{\tau_2}\alpha)
\end{align}
where $S_N(\alpha)$ is a displaced thermal state \cite{Holevo_2019}.
In the given setting, an $(M_1, M_2, n, \lambda_1, \lambda_2)$-ID code $\mathcal{C}_n$ over a quantum broadcast channel with power constraint $E$ consists of a set of input sequences $\{\alpha^n_{m,1},\ldots,\alpha^n_{m,M_1}\}$ where $m=(m_1,m_2)$ satisfying for all $m\in[M_1]\times[M_2]$ the energy constraint  
\begin{align}
\sum_{i=1}^{n} |\alpha_{m,i}|^2 \leq E.
\end{align}  
We define the POVMs
\begin{align}
\mathbf{\Pi}_{B_1^n}^{m_1} = \{\mathbf{1}-\Pi_{m_1}^{(1)}, \Pi_{m_1}^{(1)} \}, \quad  
\mathbf{\Pi}_{B_2^n}^{m_2} = \{\mathbf{1}-\Pi_{m_2}^{(2)}, \Pi_{m_2}^{(2)} \}
\end{align}  
for $ m_1 \in [M_1] $ and $ m_2 \in [M_2] $, respectively. The sender selects a pair of messages $ (m_1, m_2) $ with $ m_i \in [M_i] $, where each receiver is interested in decoding their respective message. 

The identification conditions are:  
\begin{align} 
\min_{1\leq m_1\leq M_1} \tr\left((\Pi^{(1)}_{m_1}\otimes\mathbf{1})\boxplus^\mathbf{N}_\tau(\alpha^n_{(m_1,m_2)})\right) &\geq 1-\lambda_1, \label{eq:cond1} \\ 
\min_{1\leq m_2\leq M_2} \tr\left((\eins\otimes \Pi^{(2)}_{m_2})\boxplus^\mathbf{N}_\tau(\alpha^n_{(m_1,m_2)}\right) &\geq 1-\lambda_2, \label{eq:cond2} \\ 
\max_{m_1\neq m'_1} \tr\left((\Pi^{(1)}_{m'_1}\otimes\eins)\boxplus^\mathbf{N}_\tau(\alpha^n_{(m_1,m_2)}\right) &\leq \lambda_1. \label{eq:cond3}\\ 
\max_{m_2\neq m'_2} \tr\left((\eins\otimes \Pi^{(2)}_{m'_2}) \boxplus^\mathbf{N}_\tau(\alpha^n_{(m_1,m_2)}\right) &\leq \lambda_2\label{eq:cond4}.
\end{align} 

We define two types of identification errors: (1) The missed identification error, denoted as $e_{1,m_i}^{(i)}$, occurs when receiver $i$ fails to detect message $m_i$ when $m_i$ was sent. This is controlled by Eq. \eqref{eq:cond1} and Eq. \eqref{eq:cond2}. (2) The false identification error, denoted as $e_{2,m_i}^{(i)}$, occurs when receiver $i$ incorrectly identifies a message $m'_i \neq m_i$ when $m_i$ was sent. This is controlled by Eq. \eqref{eq:cond3} and Eq. \eqref{eq:cond4}.

A non-negative rate $ R_i $ is achievable if there exists a sequence of such codes $\mathcal{C} = (\mathcal{C}_n)_{n\in\mathbb{N}}$ with parameters $(M_{i,n}, n, \lambda_{i,n})$ satisfying  
\begin{align}
R_i \leq \liminf_{n\to\infty} \frac{1}{n} \log \log M_{i,n}, \quad \text{and} \quad \lim_{n\to\infty} \lambda_{i,n} = 0.
\end{align}  
The supremum over all achievable rates is called the identification capacity. The receiver accepts a message if the measurement outcome $ s_i = 1$ and rejects it if $ s_i = 0$.

\section{Definitions and Lemmas}
In this section, we will show essential definitions. 

\begin{lemma}[Gentle Operator Lemma \cite{watts2024quantum}]\label{lem:gentle}
    Let $\rho$ be a density operator and $0 \leq \Pi \leq \mathbf{1}$ be a measurement operator. If $\tr(\Pi \rho) \geq 1 - \epsilon$ for some $\epsilon \geq 0$, then
\begin{align}
    \left\| \sqrt{\Pi} \rho \sqrt{\Pi} - \rho \right\|_1 \leq 2 \sqrt{\epsilon},
\end{align}

where $\| \cdot \|_1$ denotes the trace norm. 
\end{lemma}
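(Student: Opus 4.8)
The plan is to prove the Gentle Operator Lemma by relating the trace distance between $\rho$ and its post-measurement version $\sqrt{\Pi}\rho\sqrt{\Pi}$ to the overlap $\tr(\Pi\rho)$. First I would introduce the shorthand $\rho' = \sqrt{\Pi}\rho\sqrt{\Pi}$ and recall that the trace norm admits a variational characterisation, namely $\|\rho-\rho'\|_1 = \max_{-\eins \leq X \leq \eins} \tr\!\left(X(\rho-\rho')\right)$, where the maximum runs over Hermitian $X$ with operator norm at most one. The goal is to bound this maximum uniformly over all admissible $X$ by $2\sqrt{\epsilon}$.

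The key analytic tool will be the Cauchy--Schwarz inequality for the Hilbert--Schmidt inner product. For any such $X$ I would write $\tr\!\left(X(\rho-\rho')\right) = \tr\!\left(X\rho\right) - \tr\!\left(X\sqrt{\Pi}\rho\sqrt{\Pi}\right)$ and manipulate the second term using cyclicity of the trace to expose factors of $(\eins-\sqrt{\Pi})$ acting on $\sqrt{\rho}$. Concretely, I expect to split the difference into two pieces, each of the form $\tr\!\left(A\sqrt{\rho}\,B\sqrt{\rho}\right)$, and then apply Cauchy--Schwarz to peel off a factor controlled by $\|X\|_\infty \leq 1$ and a factor of the shape $\tr\!\left((\eins-\sqrt{\Pi})^2\rho\right)$ or $\tr\!\left((\eins-\sqrt{\Pi})\rho(\eins-\sqrt{\Pi})\right)$.

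The crucial numerical step is then to show that $\tr\!\left((\eins-\sqrt{\Pi})\rho\,(\eins-\sqrt{\Pi})\right) \leq \epsilon$, or an equivalent scalar bound. This follows from the operator inequality $(\eins-\sqrt{\Pi})^2 = \eins - 2\sqrt{\Pi} + \Pi \leq \eins - \Pi$, which holds because $0 \leq \Pi \leq \eins$ forces $\sqrt{\Pi} \geq \Pi$ (each eigenvalue $\mu \in [0,1]$ satisfies $\sqrt{\mu} \geq \mu$), and hence $-2\sqrt{\Pi} \leq -2\Pi$. Taking the expectation in $\rho$ and using the hypothesis $\tr(\Pi\rho) \geq 1-\epsilon$ gives $\tr\!\left((\eins-\sqrt{\Pi})^2\rho\right) \leq \tr\!\left((\eins-\Pi)\rho\right) \leq \epsilon$. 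Assembling the two Cauchy--Schwarz factors, one bounded by $1$ and one bounded by $\sqrt{\epsilon}$, for each of the two pieces yields the claimed bound $2\sqrt{\epsilon}$.

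The main obstacle I anticipate is bookkeeping the non-commutativity carefully when splitting $\rho - \sqrt{\Pi}\rho\sqrt{\Pi}$ so that Cauchy--Schwarz applies cleanly and the leftover operator genuinely has the form $(\eins-\sqrt{\Pi})$ sandwiching $\rho$ rather than some less tractable combination; getting the factor of exactly $2$ (rather than a looser constant) requires choosing the symmetric splitting $\rho - \rho' = (\eins-\sqrt{\Pi})\rho + \sqrt{\Pi}\rho(\eins-\sqrt{\Pi})$ and treating both additive terms identically. Since the statement is quoted from \cite{watts2024quantum}, an acceptable alternative is simply to cite the reference, but the self-contained argument above is short enough to include.
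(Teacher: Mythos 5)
Your proposal is correct, but note that the paper itself does not prove this lemma at all: it is imported verbatim with a citation to \cite{watts2024quantum}, so there is no in-paper argument to compare against. Your self-contained proof is the standard one from the literature (Winter's gentle measurement lemma, also Lemma 9.4.1 in \cite{wilde2013quantum}): the symmetric decomposition $\rho - \sqrt{\Pi}\rho\sqrt{\Pi} = (\eins-\sqrt{\Pi})\rho + \sqrt{\Pi}\rho(\eins-\sqrt{\Pi})$, Cauchy--Schwarz for the Hilbert--Schmidt inner product on each piece, and the operator inequality $(\eins-\sqrt{\Pi})^2 = \eins - 2\sqrt{\Pi} + \Pi \leq \eins - \Pi$ (valid because every eigenvalue $\mu\in[0,1]$ of $\Pi$ satisfies $\sqrt{\mu}\geq\mu$), giving $\tr\bigl((\eins-\sqrt{\Pi})^2\rho\bigr) \leq \tr\bigl((\eins-\Pi)\rho\bigr) \leq \epsilon$ and hence a factor $\sqrt{\epsilon}$ from each of the two pieces. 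The only stylistic difference from the most common textbook presentation is that you dualize via the variational characterisation $\|A\|_1 = \max_{-\eins\leq X\leq\eins}\tr(XA)$ for Hermitian $A$, whereas the usual route applies the triangle inequality to the two pieces and then H\"older's inequality $\|AB\|_1 \leq \|A\|_2\|B\|_2$ directly; the two are equivalent, and your version is equally rigorous provided you note (as your decomposition guarantees) that the operator being maximised against is Hermitian. In a conference paper the citation-only treatment the authors chose is the economical option; your write-up buys self-containedness at the cost of a few lines, and contains no gaps.
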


\begin{lemma}[see \cite{bracher2017}]\label{lem:lemma5}
Let $ \mathcal{C}_n$ be a parent codebook of size $ e^{n R_P}$, with $R_P$ being the pool rate. For identification coding, we randomly assign to each message $ m \in \mathcal{M} $ an index set $ \mathcal{V}_m \subset \mathcal{C}_n $ of expected size $e^{n\tilde{R}}$, with $\Tilde{R}$ being the binning rate, with $ R < \tilde{R} < R_P $, with $R$ being the ID rate. Let $ \mu > 0$ be such that
\begin{align}
0 < \mu < \min\{R_P - \tilde{R}, \tilde{R} - R\}.
\end{align}
Then, with high probability over the random assignment:
\begin{align}
|\mathcal{V}_m| &> (1 - \delta_n)e^{n\tilde{R}}   \\
|\mathcal{V}_m| &< (1 + \delta_n)e^{n\tilde{R}}   \\
|\mathcal{V}_m \cap \mathcal{V}_{m'}| &< 2\delta_n e^{n\tilde{R}}  
\end{align}
for all distinct $ m, m' \in \mathcal{M} $, where $ \delta_n = e^{-n\mu/2} $. Moreover, the failure probability of these events vanishes:
\begin{align}
\lim_{n \to \infty} \mathbb{P}\left[\{\mathcal{V}_m\}_{m \in \mathcal{M}} \notin \mathcal{G}_\mu\right] = 0,
\end{align}
where $ \mathcal{G}_\mu$ is the set of "good" assignments satisfying the above three properties.
\end{lemma}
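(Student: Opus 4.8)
The plan is to deploy the probabilistic method in its most direct form: fix an explicit random model for the binning, control each of the three properties for a single message (resp.\ a single pair of messages) by a Chernoff bound, and then close with a union bound over all of $\mathcal{M}$. Concretely, I would realise the random assignment by declaring, independently for every message $m \in \mathcal{M}$ and every codeword $c \in \mathcal{C}_n$, that $c \in \mathcal{V}_m$ with probability $p := e^{-n(R_P - \tilde R)}$. Since $|\mathcal{C}_n| = e^{nR_P}$, this makes $|\mathcal{V}_m|$ a binomial variable with mean $\mathbb{E}|\mathcal{V}_m| = p\,e^{nR_P} = e^{n\tilde R}$, matching the stated expected size, and it makes the families $\{\mathcal{V}_m\}_{m}$ mutually independent, which is exactly what the intersection estimate will need.

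First I would dispatch the two size bounds. Writing $|\mathcal{V}_m|$ as a sum of $e^{nR_P}$ independent indicators, the two-sided multiplicative Chernoff inequality gives
\begin{align}
\mathbb{P}\Bigl[\bigl|\,|\mathcal{V}_m| - e^{n\tilde R}\,\bigr| \geq \delta_n\,e^{n\tilde R}\Bigr] \;\leq\; 2\exp\!\left(-\tfrac{1}{3}\,\delta_n^2\,e^{n\tilde R}\right),
\end{align}
and substituting $\delta_n = e^{-n\mu/2}$ turns the exponent into $\delta_n^2\,e^{n\tilde R} = e^{n(\tilde R - \mu)}$. Because the hypothesis $\mu < \tilde R - R$ forces $\tilde R - \mu > R > 0$, this per-message failure probability is \emph{doubly} exponentially small in $n$, which is the key feature that will let the union bound survive.

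Next I would handle the intersection. By independence across messages each codeword lies in both $\mathcal{V}_m$ and $\mathcal{V}_{m'}$ with probability $p^2$, so $|\mathcal{V}_m \cap \mathcal{V}_{m'}|$ is binomial with mean $\nu := p^2 e^{nR_P} = e^{n(2\tilde R - R_P)}$. The threshold $2\delta_n e^{n\tilde R} = 2e^{n(\tilde R - \mu/2)}$ sits \emph{exponentially above} this mean precisely because $\mu < R_P - \tilde R$ yields $\tilde R - \mu/2 > 2\tilde R - R_P$; I would therefore apply the upper-tail Chernoff bound in its far-above-the-mean regime, which again delivers a doubly exponentially small failure probability, with exponent of order $e^{n(\tilde R - \mu/2)}$.

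Finally comes the union bound, and this is the step I expect to carry the real content. In the identification regime the number of messages is doubly exponential, $\log|\mathcal{M}| \approx e^{nR}$, so there are on the order of $e^{2e^{nR}}$ events to control simultaneously. Summing the per-event bounds, the total failure probability is at most of order $\exp\!\bigl(2e^{nR} - c\,e^{n(\tilde R - \mu)}\bigr)$, and since both $\tilde R - \mu$ and $\tilde R - \mu/2$ strictly exceed $R$ (again from $\mu < \tilde R - R$), the negative doubly exponential term dominates and the bound tends to $0$, giving $\lim_{n\to\infty}\mathbb{P}[\{\mathcal{V}_m\} \notin \mathcal{G}_\mu] = 0$. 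The delicate point, and the reason $\delta_n$ is calibrated to decay only as slowly as $e^{-n\mu/2}$, is exactly this race: $\delta_n$ must be small enough to make the three properties meaningful, yet large enough that $\delta_n^2 e^{n\tilde R}$ stays doubly exponentially large so the union bound outruns the doubly exponential message count. Checking that the two constraints $\mu < \tilde R - R$ and $\mu < R_P - \tilde R$ supply precisely the slack needed on, respectively, the union-bound side and the intersection-mean side is the crux of the argument.
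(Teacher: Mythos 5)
Your proposal is correct and takes essentially the standard route: the paper itself states this lemma without proof (importing it from \cite{bracher2017}), and your Bernoulli inclusion model with probability $e^{-n(R_P-\tilde R)}$ is exactly the one the paper adopts later in its proof of Theorem~\ref{thm:approximate-main}. The Chernoff-plus-union-bound calculation you give, in which the per-event failure probabilities are doubly exponentially small with exponents of order $e^{n(\tilde R-\mu)}$ and $e^{n(\tilde R-\mu/2)}$, both of which dominate the doubly exponential message count $\exp(e^{nR})$ because $\mu < \tilde R - R$, is precisely the argument behind the cited reference's binning lemma.
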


\begin{definition}[Capacity of Noisy Bosonic Channel \cite{Giovannetti_2014}]
    The capacity of a noisy bosonic channel is given by
\begin{equation}
    C = g(\tau E + (1-\tau) N) - g((1-\tau)N),
\end{equation}
where $g(x)$ is the entropy function for a thermal state, $\tau$ the transmissivity, $E$ the average photon number constraint, and $N \in [0,\infty[$ being the additive noise.
\end{definition}

\section{Results}

\begin{theorem}\label{thm:main}
      Let $\mathcal{N}$ be a noisy bosonic broadcast channel with power constraint $E$, transmissivity $\tau_i$, and additive noise levels $N_i$ at each receiver $i$, $i \in \{1,2\}$.
    Then, an achievable identification rate region is given by:
    \begin{align}
\mathcal{R}_{\text{ID}}^{\text{noisy}}&(\mathcal{N}) =\nonumber\\ 
&\left\{
(R_1, R_2) : \begin{array}{ll}
 R_1 \leq g(\tau_1 E + N_1) -  g(N_1)  \\
 R_2 \leq g(\tau_2 E + N_2) - g(N_2)
\end{array}\right\}
\end{align}
\end{theorem}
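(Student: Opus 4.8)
The plan is to exploit that the four identification conditions decouple across receivers: since $\tr\bigl((\Pi^{(1)}_{m_1}\otimes\eins)\,\boxplus^{\mathbf N}_\tau(\alpha^n)\bigr)=\tr\bigl(\Pi^{(1)}_{m_1}\,\mathcal N^{(1)\otimes n}(\alpha^n)\bigr)$, conditions \eqref{eq:cond1} and \eqref{eq:cond3} involve only the marginal channel $\mathcal N^{(1)}$ and the index $m_1$, and symmetrically \eqref{eq:cond2} and \eqref{eq:cond4} involve only $\mathcal N^{(2)}$ and $m_2$. This reduces the construction to two single-user identification problems over the marginal bosonic channels, coupled only through the single transmitted sequence $\alpha^n_{(m_1,m_2)}$. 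The right-hand sides $C_i:=g(\tau_i E+N_i)-g(N_i)$ are exactly the single-user Holevo capacities of these marginals (consistent with the capacity Definition), attained by a Gaussian coherent-state ensemble of mean photon number $E$. For each $i$ I would fix such an ensemble and, at a pool rate $R_{P,i}$ slightly below $C_i$, produce a transmission (parent) codebook $\{\alpha^n_j\}$ together with a decoding POVM $\{D^{(i)}_j\}$ of vanishing average error.

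Before binning I have to confront the infinite-dimensional nature of the problem: the per-codeword energy constraint $\sum_{i}|\alpha_{m,i}|^2\le E$ must hold exactly, while the output states $S_{N_i}(\sqrt{\tau_i}\alpha)$ live on the full Fock space $\mathcal F$. I would restrict the input amplitudes to a finite grid of bounded modulus, so the energy bound holds deterministically, and project the outputs onto $\mathcal F_L$. The key tool is the Gentle Operator Lemma (Lemma \ref{lem:gentle}): a displaced thermal state of bounded energy places almost all its weight on $\mathcal F_L$ for large $L$, so the truncation projector $\Pi_L$ satisfies $\tr(\Pi_L\rho)\ge 1-\epsilon(L)$ with $\epsilon(L)\to 0$, whence $\|\sqrt{\Pi_L}\rho\sqrt{\Pi_L}-\rho\|_1\le 2\sqrt{\epsilon(L)}$. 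Every trace appearing in \eqref{eq:cond1}--\eqref{eq:cond4} is then perturbed by at most $2\sqrt{\epsilon(L)}$, reducing the analysis to a finite-dimensional classical-quantum channel at a cost that vanishes as $L\to\infty$.

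With a good finite-dimensional pool in hand I would invoke Lemma \ref{lem:lemma5} for each receiver, assigning to every message $m_i$ a random index set $\mathcal V^{(i)}_{m_i}$ of size $\approx e^{n\tilde R_i}$ with $R_i<\tilde R_i<R_{P,i}$, and obtaining (with probability tending to one) both size concentration and the small overlap $|\mathcal V^{(i)}_{m_i}\cap\mathcal V^{(i)}_{m'_i}|<2\delta_n e^{n\tilde R_i}$. The identification POVM is $\Pi^{(i)}_{m_i}=\sum_{j\in\mathcal V^{(i)}_{m_i}}D^{(i)}_j$. The missed-identification conditions \eqref{eq:cond1}/\eqref{eq:cond2} follow because a reliable decoder concentrates the acceptance weight on the diagonal terms when the sent index lies in $\mathcal V^{(i)}_{m_i}$, while the false-identification conditions \eqref{eq:cond3}/\eqref{eq:cond4} follow because the off-diagonal acceptance weight is bounded by the relative overlap $|\mathcal V^{(i)}_{m_i}\cap\mathcal V^{(i)}_{m'_i}|/|\mathcal V^{(i)}_{m_i}|\le 2\delta_n\to 0$.

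The step I expect to be the main obstacle is the simultaneity across receivers: the encoder must emit one coherent-state sequence whose two outputs pass receiver $1$'s test for $m_1$ and receiver $2$'s test for $m_2$ at once. My plan is to let the stochastic encoder draw the transmitted index uniformly from the intersection $\mathcal V^{(1)}_{m_1}\cap\mathcal V^{(2)}_{m_2}$ and to use the counting estimates of Lemma \ref{lem:lemma5} to keep this intersection nonempty whenever $\tilde R_1+\tilde R_2$ exceeds the underlying pool rate. The genuine tension is that a single pool decodable by both receivers would cap both rates at $\min(C_1,C_2)$; overcoming it requires organizing the pool so that each receiver attains its own marginal resolution---most plausibly through a degraded/superposition refinement adapted to the beam-splitter structure $\mathcal N(\alpha)=S_{N_1}(\sqrt{\tau_1}\alpha)\otimes S_{N_2}(\sqrt{\tau_2}\alpha)$---while verifying that receiver $2$'s averaged output states remain mutually distinguishable under this correlated encoding. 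Once this is secured, letting $n\to\infty$ and $L\to\infty$ drives $\lambda_{i,n}\to 0$ and $\tfrac1n\log\log M_{i,n}\to C_i$, yielding the claimed region.
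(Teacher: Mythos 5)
Your skeleton --- discretise the input alphabet, truncate the outputs onto $\mathcal F_L$ via the Gentle Operator Lemma, bin a common codeword pool as in Lemma~\ref{lem:lemma5}, and transmit from the intersection $\mathcal V^{(1)}_{m_1}\cap\mathcal V^{(2)}_{m_2}$ --- matches the paper's construction (Theorem~\ref{thm:approximate-main} combined with Lemma~\ref{lem:approximation-lemma}). The genuine gap sits exactly at the step you yourself flag as the main obstacle, and your proposed resolution is the wrong one. Because you build the identification tester from a transmission decoding POVM for the pool, $\Pi^{(i)}_{m_i}=\sum_{j\in\mathcal V^{(i)}_{m_i}}D^{(i)}_j$, the missed-identification analysis needs each receiver to decode the \emph{entire pool} reliably, i.e.\ $R_P<C_i$ for both $i$; and since both receivers' bins must be subsets of one common pool (otherwise the intersection $\mathcal V^{(1)}_{m_1}\cap\mathcal V^{(2)}_{m_2}$ is meaningless), this forces $\tilde R_i\le R_P<\min(C_1,C_2)$, capping both rates at the minimum --- strictly inside the claimed rectangle whenever $C_1\neq C_2$. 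Superposition/degraded-refinement coding cannot repair this: the input is a single classical sequence $\alpha^n$ whose two marginal outputs are fixed by the beam splitter, and superposition coding is a tool for trading transmission rates against each other, which is precisely the tradeoff the rectangular region asserts is absent for identification.

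The paper's resolution is different and is the crux of the Bracher--Lapidoth binning argument it imports: the acceptance operators are not sums of transmission decoders but $\delta$-typical projectors, i.e.\ binary hypothesis tests for ``did the output arise from some codeword in my bin.'' The missed-ID error is then controlled by typicality of the sent codeword itself, while the false-ID error is bounded by a union bound over the bin, of order $e^{n\tilde R_i}\cdot 2^{-n(I(X;B_i)-\delta)}$ as in Eq.~\eqref{eq:typicality}, plus the bin-overlap term from Lemma~\ref{lem:lemma5}. Consequently the only rate constraint is on the \emph{binning} rate, $\tilde R_i<I(X;B_i)$; the pool rate is free to lie anywhere in the interval $(\max\{\tilde R_1,\tilde R_2\},\,\tilde R_1+\tilde R_2)$ and may exceed both single-user capacities, because no receiver ever decodes the pool. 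This is what allows $R_1\to C_1$ and $R_2\to C_2$ simultaneously. A secondary omission: restricting the amplitudes to a finite grid handles the energy constraint, but you never argue that the Holevo information of the discretised ensemble converges to $g(\tau_iE+N_i)-g(N_i)$; that convergence is exactly Lemma~\ref{lem:approximation-lemma}, which the paper needs to pass from Theorem~\ref{thm:approximate-main} to Theorem~\ref{thm:main}, and your gentle-operator truncation of the \emph{outputs} does not substitute for it.
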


To prove Theorem \ref{thm:main}, we proceed in three steps: (1) We discretise the sender alphabet $\mathbb{C}$ by selecting a discrete subset $\mathbf{A}$ containing exactly $x\in\mathbb{N}$ states. (2) Before further action, we let each receiver apply a threshold detection for the received energy to each received signal and report an error if the energy exceeds a threshold $L\approx\log n$ for any of the received signals, where $n$ is the number of channel uses. This makes the receiver system effectively finite-dimensional so that known tools from the analysis of finite-dimensional systems apply. This lets us prove the existence of a sequence of achievable rates described by Holevo quantities. (3) In the final step, we show that this sequence of Holevo quantities converges towards the Gordon functions, as stated in Theorem \ref{thm:main}.

The required technical statements are the following
\begin{theorem}\label{thm:approximate-main}
    Let $\mathcal{N}$ be a noisy bosonic broadcast channel with power constraint $E$, transmissivity $\tau_i$, and additive noise levels $N_i$ at each receiver $i$, $i \in \{1,2\}$. Suppose the input alphabet is restricted to a discrete subset $\mathbf{A}$ of coherent states.

     Then an achievable identification rate region over $\mathcal{N}$ is:
     \begin{align}
\mathcal{R}_{\text{ID}, \mathbf{A}}^{\text{noisy}}&(\mathcal{N}) =\nonumber\\ 
&\left\{
(R_1, R_2) : \begin{array}{ll}
 R_1 \leq \chi({p(a), \mathcal{N}_1(a)}{a \in \mathbf{A}})  \\
 R_2 \leq \chi({p(a), \mathcal{N}_2(a)}{a \in \mathbf{A}})
\end{array}\right\}
\end{align}
\end{theorem}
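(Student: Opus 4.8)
The plan is to lift the single-user "transmission-code-plus-random-binning" recipe of Ahlswede--Dueck to the present classical-quantum broadcast setting, adding two technical layers on top: a finite-dimensional reduction that tames the infinite-dimensional displaced-thermal outputs, and a superposition pool that lets one channel input serve both receivers at once.

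First I would make the output finite-dimensional. After the transmitter sends an i.i.d. codeword with letters drawn from the given discrete alphabet $\mathbf{A}$, each marginal produces a tensor product of displaced thermal states $S_{N_i}(\sqrt{\tau_i}a)$. I let receiver $i$ first apply the energy-threshold projection onto $\mathcal{F}_L^{\otimes n}$ with $L\approx\log n$. Since the mean photon number per mode is bounded ($\mathbf{A}$ and $N_i$ are fixed), a standard tail (Chernoff) estimate shows that the projected mass is at least $1-\epsilon_n$ with $\epsilon_n\to0$; the Gentle Operator Lemma (Lemma \ref{lem:gentle}) then guarantees that this projection disturbs every relevant state by at most $2\sqrt{\epsilon_n}$ in trace norm. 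From here on I may reason inside the finite-dimensional space $\mathcal{F}_L^{\otimes n}$ and absorb the $2\sqrt{\epsilon_n}$ terms into the final error budget.

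Second, on the truncated marginals I invoke the Holevo--Schumacher--Westmoreland theorem to obtain, for each $i$, a pool of $e^{nR_{P,i}}$ codewords with a decoding POVM $\{D^{(i)}_k\}$ of vanishing average error whenever $R_{P,i}<\chi(\{p(a),\mathcal{N}_i(a)\}_{a\in\mathbf{A}})$. Because a single coherent-state sequence reaches both receivers through their own marginals, I would build one superposition pool: cloud centres resolvable by the weaker receiver at its Holevo rate and satellites resolvable by the stronger receiver at its (larger) Holevo rate, both generated from the common distribution $p$. Random binning (Lemma \ref{lem:lemma5}) then assigns to each identity $m_i$ a subset $\mathcal{V}^{(i)}_{m_i}$ of the indices receiver $i$ can resolve, and I set $\Pi^{(i)}_{m_i}=\sum_{k\in\mathcal{V}^{(i)}_{m_i}}D^{(i)}_k$. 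A counting argument on the bin sizes of Lemma \ref{lem:lemma5} shows that for every pair $(m_1,m_2)$ there is a pool codeword lying simultaneously in $\mathcal{V}^{(1)}_{m_1}$ and in the coarse bin $\mathcal{V}^{(2)}_{m_2}$, and the encoder transmits a uniformly random such codeword. Since $\frac1n\log\log M_i$ can be pushed up to the binning rate, the identification rates approach $\chi(\{p(a),\mathcal{N}_i(a)\}_{a\in\mathbf{A}})$, giving exactly the region of Theorem \ref{thm:approximate-main}.

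The error analysis then splits along the two error types. For the missed-identification bound \eqref{eq:cond1}--\eqref{eq:cond2}, correctness of the transmission decoder places the decoded index in $\mathcal{V}^{(i)}_{m_i}$ by construction, so the acceptance probability is at least $1-\epsilon_n^{\mathrm{dec}}-2\sqrt{\epsilon_n}$. For the false-identification bound \eqref{eq:cond3}--\eqref{eq:cond4}, $\tr(\Pi^{(i)}_{m'_i}\mathcal{N}_i(\alpha^n))=\sum_{k\in\mathcal{V}^{(i)}_{m'_i}}\tr(D^{(i)}_k\mathcal{N}_i(\alpha^n))$ is small because the transmitted index can fall into $\mathcal{V}^{(i)}_{m'_i}$ only through the intersection $\mathcal{V}^{(i)}_{m_i}\cap\mathcal{V}^{(i)}_{m'_i}$, whose size is controlled by the third inequality of Lemma \ref{lem:lemma5}, plus a decoding-error term. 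I expect the main obstacle to be precisely this false-identification step in the \emph{mixed-state} regime: unlike the noiseless case of \cite{prd}, the outputs are density operators, so the $D^{(i)}_k$ no longer project onto orthogonal pure states and the overlap terms must be bounded by genuine quantum-hypothesis-testing estimates, all while keeping the Gentle-Operator perturbations and the superposition decoding errors uniformly small over the doubly-exponentially many identities.
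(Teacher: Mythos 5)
Your first layer (the finite-dimensional reduction) matches the paper almost verbatim: the paper also truncates with $T_L$, $L\approx\log n$ (Lemma~\ref{lem:truncated-projector}), controls the disturbance with the Gentle Operator Lemma~\ref{lem:gentle}, and then works inside $\mathcal F_L^{\otimes n}$ with typicality constants adjusted for the logarithmic dimension. The overall skeleton of pool plus random binning via Lemma~\ref{lem:lemma5}, and the split of the false-identification error into ``bin overlap'' plus ``hypothesis-testing error,'' also agrees with the paper. The gap is in the core of your broadcast construction.

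You build the identification decoders on top of HSW \emph{transmission} decoders: receiver $i$ decodes a pool (or cloud/satellite) index and then checks membership in $\mathcal V^{(i)}_{m_i}$. This forces each receiver's bins to live inside an index set that this receiver can reliably decode, i.e.\ of rate below its Holevo quantity, and it is exactly this constraint that your superposition pool cannot repair. If the cloud structure is non-degenerate, the rate at which receiver $2$ can resolve clouds is $I(U;B_2)=\chi(\{p(a),\mathcal N_2(a)\}_{a\in\mathbf A})-I(X;B_2|U)$, which is \emph{strictly} below $\chi(\{p(a),\mathcal N_2(a)\}_{a\in\mathbf A})$ whenever a cloud contains more than one satellite; worse, if the satellites inside each cloud are drawn i.i.d.\ from the common distribution $p$, as you state, then all clouds induce essentially the same output state at receiver $2$ and $I(U;B_2)\approx 0$. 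The only way to restore $R_2\to\chi_2$ is to make the clouds degenerate (one satellite per cloud), but then the pool is flat and has rate close to $\chi_1>\chi_2$, so receiver $2$ can no longer decode it and your decode-then-check decoder breaks. Hence, as described, your scheme proves a strictly smaller region than the rectangular one claimed in Theorem~\ref{thm:approximate-main}. The paper avoids this trap (following the Bracher--Lapidoth construction it cites): it uses a \emph{single flat} pool of rate $R_P$ with $\max\{\tilde R_1,\tilde R_2\}<R_P<\tilde R_1+\tilde R_2$ --- deliberately too large for either receiver to decode --- the sender transmits a codeword from the bin intersection $\mathcal V^{(1)}_{m_1}\cap\mathcal V^{(2)}_{m_2}$, and each receiver never decodes the pool index at all: it only performs the binary hypothesis test of Eqs.~\eqref{eq:cond1}--\eqref{eq:cond4}, whose acceptance operator is assembled from typical projectors of the codewords in its own bin. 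The false-accept probability is then bounded by a union bound over the \emph{bin} (size $e^{n\tilde R_i}$) times the per-codeword typicality bound $2^{-n(I(X;B_i)_\rho-\delta)}$, which vanishes iff $\tilde R_i<I(X;B_i)_\rho$, independently of $R_P$. That decoupling of the pool rate from the receivers' decoding capability is the essential idea your proposal is missing, and it is what lets both receivers sit simultaneously at their full point-to-point Holevo rates.
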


We summarise the construction from the Appendix on Gaussian discretisation of \cite{seitz2025capacity} into the following Lemma:
\begin{lemma}\label{lem:approximation-lemma}[C.f. {\cite{seitz2025capacity}}]
    To every Gaussian ensemble $G(0,E)$ with probability density function $p(\alpha):=\frac{1}{\pi E} e^{-|\alpha|^2/E}$ there exists a sequence $(p_x)_{x\in\mathbb N}$ of probability distributions on $\mathbb C$ having support on respective discrete alphabets $\mathbf A_x$, $x\in\mathbb N$, so that 
    \begin{align}
        |\chi(\{p_i(\alpha),\mathcal N_{\tau,E}(\alpha)\}_{\alpha \in \mathbf{A}_x}) - g(\tau E) + g(\tau E + N)|\leq \epsilon_x
    \end{align}
    for a sequence $(\epsilon_x)_{x\in\mathbb N}$ satisfying $\lim_{x\to\infty}\epsilon_x=0$.
\end{lemma}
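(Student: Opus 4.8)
The plan is to reduce the convergence of the Holevo quantity to a \emph{single} continuity statement about the von Neumann entropy of the average output state, and then to realise that statement by discretising the complex plane on a family of finite grids whose mesh shrinks and whose support grows with $x$. The first thing I would exploit is a structural simplification special to the thermal channel with coherent-state inputs: every output $\mathcal N_{\tau,E}(\alpha)=S_N(\sqrt{\tau}\alpha)$ is a \emph{displaced} thermal state, and since displacement is unitary it leaves the entropy invariant, so $H(\mathcal N_{\tau,E}(\alpha))=g(N)$ for every $\alpha$ and every candidate alphabet. Hence, for any distribution $p_x$ on any discrete alphabet $\mathbf A_x$,
\[
\chi(\{p_x(\alpha),\mathcal N_{\tau,E}(\alpha)\}_{\alpha\in\mathbf A_x})=H(\bar\rho_x)-g(N),\qquad \bar\rho_x:=\sum_{\alpha\in\mathbf A_x}p_x(\alpha)\,S_N(\sqrt{\tau}\alpha),
\]
so the averaged-entropy term is \emph{exact}, and the whole problem collapses to showing $H(\bar\rho_x)\to g(\tau E+N)=H(\bar\rho_\infty)$, where $\bar\rho_\infty=\int p(\alpha)\,S_N(\sqrt{\tau}\alpha)\,d^2\alpha$ is the thermal state of mean photon number $\tau E+N$; this makes $\chi$ converge to the continuous (Gordon) value $g(\tau E+N)-g(N)$.

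Second, I would build $(\mathbf A_x,p_x)$ by a truncate-and-quantise scheme. For each $x$ I fix a radius $R_x\to\infty$ and a mesh $\delta_x\to 0$, partition the disk $\{|\alpha|\le R_x\}$ into cells $C_j$ of diameter at most $\delta_x$, pick a representative $\alpha_j\in C_j$, and set $p_x(\alpha_j)=\int_{C_j}p(\alpha)\,d^2\alpha$, reassigning the vanishing tail mass $\int_{|\alpha|>R_x}p$ to the origin. This makes $\mathbf A_x$ finite and $p_x$ a genuine probability distribution whose second moment, and hence the mean photon number $\tau\sum_j p_x(\alpha_j)|\alpha_j|^2+N$ of $\bar\rho_x$, converges to $\tau E+N$. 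Because $\alpha\mapsto S_N(\sqrt{\tau}\alpha)$ is trace-norm continuous (strong continuity of the displacement operators) and the relevant states are uniformly energy-bounded, $\bar\rho_x$ is a quadrature approximation of the integral defining $\bar\rho_\infty$; together with the vanishing tail this yields $\|\bar\rho_x-\bar\rho_\infty\|_1\to 0$.

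Third and finally, I would upgrade trace-norm convergence to entropy convergence. This is the crux, because the von Neumann entropy is \emph{not} trace-norm continuous on an infinite-dimensional space, so the argument must genuinely invoke the energy constraint. I would certify a uniform bound $\tr(\hat n\,\bar\rho_x),\tr(\hat n\,\bar\rho_\infty)\le\bar n<\infty$ with $\bar n$ close to $\tau E+N$, and then apply the energy-constrained uniform continuity bound for the von Neumann entropy (Winter's tight bound / Shirokov), which gives $|H(\bar\rho_x)-H(\bar\rho_\infty)|\le\eta(\|\bar\rho_x-\bar\rho_\infty\|_1,\bar n)$ with $\eta\to 0$ as its first argument vanishes. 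Setting $\epsilon_x:=\eta(\|\bar\rho_x-\bar\rho_\infty\|_1,\bar n)$ then delivers the claim with $\lim_{x\to\infty}\epsilon_x=0$.

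I expect the main obstacle to be precisely this infinite-dimensional entropy-continuity step: one must simultaneously control the Gaussian tail so that the alphabet stays finite and the energy stays bounded \emph{uniformly} in $x$, and then produce the uniform photon-number bound $\bar n$ that feeds the continuity estimate, since without such a bound closeness in trace norm carries no information about the entropy difference. By contrast, the exactness of the averaged-entropy term and the trace-norm convergence of the grid average are comparatively routine.
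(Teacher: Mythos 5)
Your proposal is correct, and it is essentially a self-contained reconstruction of the argument the paper itself never spells out: the paper ``proves'' Lemma~\ref{lem:approximation-lemma} only by deferring to the Gaussian-discretisation appendix of \cite{seitz2025capacity}, remarking that that proof relies on \cite{Winter2015} and \cite{Becker2021}. Your three steps match that blueprint: the reduction $\chi = H(\bar\rho_x) - g(N)$ via unitary (displacement) invariance of the outputs' entropy, the truncate-and-quantise grid giving $\|\bar\rho_x-\bar\rho_\infty\|_1\to 0$ (where displacement covariance even gives you a globally uniform modulus of continuity, since $\|S_N(\sqrt{\tau}\alpha)-S_N(\sqrt{\tau}\beta)\|_1$ depends only on $\alpha-\beta$), and crucially the energy-constrained uniform continuity bound for the von Neumann entropy — which is precisely the ingredient from \cite{Winter2015} that the paper flags. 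You correctly identify that this last step is where the infinite-dimensionality bites and that the uniform photon-number bound $\bar n$ is what makes it work. One remark: as printed, the lemma's display would force $\chi\to g(\tau E)-g(\tau E+N)<0$, which is impossible for a Holevo quantity; your proof establishes the evidently intended statement $|\chi - (g(\tau E+N)-g(N))|\le\epsilon_x$, consistent with the rate region of Theorem~\ref{thm:main}, so you have in effect also corrected a sign typo in the lemma rather than deviated from it.
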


\begin{lemma}[See \cite{nötzel2025infinitefoldquantumadvantageclassical}]\label{lem:truncated-projector}
Let $\alpha \in \mathbb{C}$, $L \in \mathbb{N}$, $T_L := \sum_{n=0}^{L} |n\rangle \langle n|$. Then
\begin{align}
  \tr (T_L|\alpha\rangle \langle \alpha|) \geq 1 - \frac{1}{L!} \cdot \max\{2, |\alpha|^2\}^L. 
\end{align}
Let $E > 0$. If $L \geq 2 \cdot 50^2 \cdot E \cdot \log(e^{-1})$, then
\begin{align}
   \tr (T_L \frac{1}{\pi E} \int e^{-|\alpha|^2 / E} |\alpha\rangle \langle \alpha| d\alpha) \geq 1 - 2^{-L}. 
\end{align}
\end{lemma}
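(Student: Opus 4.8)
The plan is to reduce both inequalities to the Fock-basis photon statistics of the states involved and then to estimate the resulting tail sums.

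For the first inequality I would expand the coherent state as $\langle n|\alpha\rangle = e^{-|\alpha|^2/2}\alpha^n/\sqrt{n!}$, so that $\tr(T_L|\alpha\rangle\langle\alpha|)=\sum_{n=0}^{L}e^{-\mu}\mu^n/n!$ with $\mu:=|\alpha|^2$ is precisely a Poisson cumulative distribution. It then suffices to bound the tail $\sum_{n>L}e^{-\mu}\mu^n/n!$ by $\max\{2,\mu\}^L/L!$. The elementary input is the factorial estimate $n!\ge L!\,(L+1)^{n-L}$ for $n\ge L$, which, after dropping the factor $e^{-\mu}\le 1$, dominates the tail by the geometric series $(\mu^L/L!)\sum_{j\ge 1}(\mu/(L+1))^j$. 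I would then split on the size of $\mu$. If $\mu\le (L+1)/2$, the common ratio is at most $1/2$, the geometric sum is at most $1$, and the tail is bounded by $\mu^L/L!\le\max\{2,\mu\}^L/L!$. If instead $\mu>(L+1)/2$, the geometric bound degenerates, so I would use the trivial estimate (the tail is a probability, hence $\le 1$) together with the AM--GM inequality $(L!)^{1/L}\le (L+1)/2$ applied to $1,\dots,L$: this gives $\mu^L> L!$, hence $\max\{2,\mu\}^L/L!\ge\mu^L/L!>1$, and the claimed inequality holds again. Merging the two regimes proves the pointwise bound.

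For the second inequality I would exploit that the Gaussian average $\rho_E:=\frac{1}{\pi E}\int e^{-|\alpha|^2/E}|\alpha\rangle\langle\alpha|\,d^2\alpha$ is exactly the thermal state of mean photon number $E$; a short polar-coordinate computation shows it is diagonal in the Fock basis with geometric weights $P(n)=(E+1)^{-1}(E/(E+1))^n$. Consequently $\tr(T_L\rho_E)=\sum_{n=0}^{L}P(n)=1-(E/(E+1))^{L+1}$, so the missed-detection probability is the closed-form tail $(E/(E+1))^{L+1}$, and the statement reduces to forcing this tail below $2^{-L}$ through the hypothesis on $L$. An alternative that stays closer to the first part is to integrate the pointwise bound against the Gaussian weight and reduce everything to the Gamma integral $\int_0^\infty e^{-u/E}u^L\,du=L!\,E^{L+1}$ after the substitution $u=|\alpha|^2$.

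The main obstacle is precisely this passage from the pointwise estimate to the averaged one. The pointwise bound $1-\max\{2,\mu\}^L/L!$ turns negative, hence useless, for large $\mu$, so integrating it naively over the heavy Gaussian tail over-counts and loses a factor of order $E^L$; the honest route is therefore the exact thermal tail $(E/(E+1))^{L+1}$, or else a careful truncation of the Gaussian integral at a radius where the first-part bound is still informative. The quantitative content then lives entirely in comparing the thermal decay rate $\log\tfrac{E+1}{E}$ with the target exponent in $2^{-L}$, and it is the bookkeeping of this comparison that fixes the explicit constant in the lower bound required of $L$.
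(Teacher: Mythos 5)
First, a point of comparison: the paper contains no proof of Lemma~\ref{lem:truncated-projector} at all --- it is imported by citation from \cite{nötzel2025infinitefoldquantumadvantageclassical} --- so your attempt can only be judged on its own merits. Your first part stands on its own as a complete and correct proof: the reduction to the Poisson cumulative distribution, the factorial estimate $n!\ge L!\,(L+1)^{n-L}$, the geometric-series bound in the regime $|\alpha|^2\le (L+1)/2$, and the AM--GM observation $(L!)^{1/L}\le (L+1)/2$ that renders the claim vacuously true in the complementary regime together give exactly the stated pointwise inequality.

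The second part, however, has a genuine gap, located precisely at the step you defer to ``bookkeeping.'' Your reduction is correct: the Gaussian average is the thermal state of mean photon number $E$, so the missed-detection probability is exactly $\left(E/(E+1)\right)^{L+1}$. But the inequality $\left(E/(E+1)\right)^{L+1}\le 2^{-L}$ is equivalent to $\log_2(1+1/E)\ge L/(L+1)$, whose right-hand side increases to $1$ as $L$ grows, while the left-hand side is a constant strictly below $1$ whenever $E>1$. Hence for any $E>1$ the target inequality fails for every sufficiently large $L$ (for $E=2$ it already fails at $L=2$), and a \emph{lower} bound on $L$ --- the only thing the hypothesis supplies --- makes matters strictly worse, not better. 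No amount of constant-tuning can produce the completion you assert exists. (Note also that with the paper's base-two logarithm and $e$ read as Euler's number, the hypothesis $L\ge 2\cdot 50^2\cdot E\cdot\log(e^{-1})$ has a negative right-hand side and is vacuous, so the failure for $E>1$ is unconditional.) The resolution is that in the source, $e$ is evidently an error parameter and the conclusion should read $\tr(T_L\rho_E)\ge 1-e$; under that reading your exact-tail formula does close the argument, since for $E\le 1$ the tail is already at most $2^{-(L+1)}$, while for $E\ge 1$ the bound $\ln(1+1/E)\ge 1/(E+1)$ gives a tail at most $\exp\left(-\tfrac{5000E}{E+1}\log(e^{-1})\right)\le e$ under the stated hypothesis. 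Your own computation put you one line away from detecting that the lemma as transcribed is false; the proposal should have flagged this defect in the statement rather than promising that bookkeeping would fix the constant.
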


We prove Theorem \ref{thm:main} based on Theorem \ref{thm:approximate-main} and Lemma \ref{lem:approximation-lemma} by showing that $\mathcal{R}_{\mathbf A}$ can approximate $\mathcal{R}$ to arbitrary precision given a large enough $\mathbf{A}$. 

The proof of Theorem \ref{thm:approximate-main} is based on Lemma \ref{lem:approximation-lemma}, which lets us use information theory techniques for finite-dimensional and bosonic systems.

We note that Lemma \ref{lem:approximation-lemma} itself uses the prior works \cite{Winter2015} and \cite{Becker2021} in its proof.

To illustrate the effect of the input energy on the identification capabilities of noisy bosonic broadcast channels, we plot the achievable identification rate regions for two different average photon numbers in Figure \ref{fig:plot}. These regions demonstrate how varying energy constraints influence the identification performance of the two receivers.

\begin{figure}
    \centering
    \includegraphics[width=0.9\linewidth]{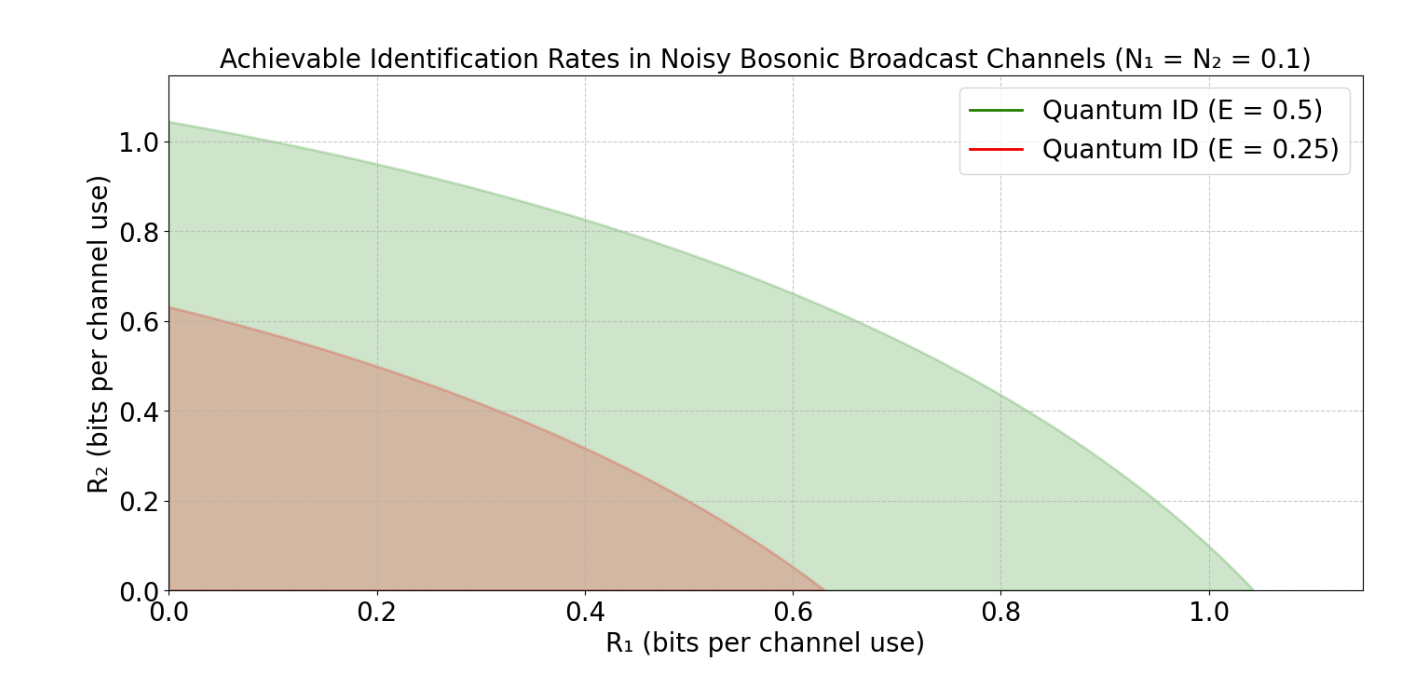}
    \caption{
   A plot of the achievable identification rate regions for a noisy bosonic broadcast channel with additive noise $ N_1 = N_2 = 0.1$, for average input photon numbers $E = 0.5$ (green region) and $E = 0.25$ (red region), as shown in Theorem~\ref{thm:main}.}
    \label{fig:plot}
\end{figure}

\section{Proofs}

\begin{proof}[Proof of Theorem~\ref{thm:approximate-main}]
    We will take the following steps to show that the rate region is achievable. First, we will show the code construction in more detail, then design the decoder, and lastly, we will analyse the error probabilities.

    We pick a discrete set $\mathbf A\subset\mathbb C$ with the property $\max_{\alpha\in\mathbf{A}}|\alpha|^2\leq E'$ for some $E'>0$. A random binning strategy is used to encode the messages \cite{bracher2017}. The sender creates a shared pool of random codewords $\{\alpha^n(v) \}_{v \in \mathcal{V}}\subset\mathbf A^n$, where $\mathcal{V}$ is a large set of codeword indices with $|\mathcal{V}| = e^{nR_P}$ and  $R_P$ is the pool rate. It holds $\max\{\Tilde{R}_1,\Tilde{R}_2\} < R_P < \Tilde{R}_1 + \Tilde{R}_2$, where $\Tilde{R}_i$ is the binning rate for receiver $i$. This condition ensures that there are enough codewords per bin ($R_P > \Tilde{R}_i$) and that there are non-empty intersections ($R_P < \Tilde{R}_1 + \Tilde{R}_2$).
    The binning rate is fixed such that $R_i < \Tilde{R}_i < I(A;B_i)_{\sigma}$ for each receiver $i$. 
    The code size $M_i = \exp(e^{nR_i})$ is double exponential.
    We will first look at the single-mode bosonic channel $\mathcal{N}$ to use Lemma \ref{lem:approximation-lemma}. 
    To save space, we occasionally write $\rho_\alpha$ instead of $S_{N_i}(\sqrt{\tau_i}\alpha)$ for brevity. 
    With Lemma $\ref{lem:truncated-projector}$ we have for large enough $L\in\mathbb N$
    \begin{align}
        \tr(\rho_\alpha T_L) \geq 1- 2^{-L}.
    \end{align}
    With the Gentle Operator Lemma, we have
    \begin{align}
        \| T_L \rho_\alpha T_L - \rho_\alpha\|_1 \leq 2^{1-L/2} \label{eq: gentleopTL}.
    \end{align}
    We define the quantum channel $\mathcal{T}_L(\rho) = T_L \rho T_L + (1-\tr(T_L \rho))\braket{0}{0}$, which ensures
    \begin{align}
        \|  \mathcal T_L( \rho_\alpha ) - \rho_\alpha\|_1 \leq 2^{2-L/2}.
    \end{align}
    It further holds for every state $\rho$ and operator $\Pi$ that
    \begin{align}
        \tr[\Pi\mathcal T_L(\rho)^{\otimes n}] = \tr[(\mathcal T^{*}_L)^{\otimes n}(\Pi)\rho ],
    \end{align}
   where $\mathcal T^*$ is the dual of $\mathcal T$. In particular for $\rho=\rho_\alpha$ and for $\Pi$ satisfying $0\leq\Pi\leq\sum_{k=0}^{L-1}|k\rangle\langle k|$ this implies that $\Lambda:=(\mathcal T^{*})^{\otimes n}(\Pi)$ has the same probability of detecting $\rho_\alpha$ that $\Pi$ had in detecting $\mathcal T^{\otimes n}(\rho_\alpha)$. Codes for the quantum channel $\mathcal T_L\circ \mathcal N$ involving $L$-dimensional POVM elements can therefore be transformed into codes for $\mathcal N$ without any loss of performance. 
    In order to apply the bounds given in \cite[Eq. (50)-(52)]{prd}, we explicitly carve out the effect of our logarithmic dimension $L=\log(n)$. With $\tilde\rho_\alpha:=\mathcal T(\rho_\alpha)$ we have
    \begin{align}
       \tr(\Pi_\delta^n(\Tilde{\rho})\Tilde{\rho}^{\otimes n}) &\geq 1- 2\log(n)\exp(-2n\delta^2)\\
    &\geq 1 - 2^{-b'n\delta^2} \label{eq:first}\\
    2^{-n(H(\Tilde{\rho}) + c'\delta)}\Pi_\delta^n(\Tilde{\rho})&\leq \Pi_\delta^n(\Tilde{\rho})\Tilde{\rho}^{\otimes n}\Pi_\delta^n(\Tilde{\rho})\nonumber\\
    &\leq 2^{-n(H(\Tilde{\rho})-c'\delta)}\Pi_\delta^n(\Tilde{\rho})\label{eq:second}\\
      \tr(\Pi^n_\delta(\Tilde{\rho}))  &\leq 2^{n(H(\Tilde{\rho})+c'\delta)},  \label{eq:third}
    \end{align}
with adjusted constants $b', c' >0$. 
The first inequality, Eq. \eqref{eq:first}, is directly derived from \cite[Lemma 2.12]{Csiszár_Körner_2011}. The second inequality Eq. \eqref{eq:second} is derived from \cite[Property 15.1.3]{wilde2013quantum}. The third inequality Eq. \eqref{eq:third} is derived from \cite[Lemma 2.3]{Csiszár_Körner_2011} and \cite[Property 15.1.2]{wilde2013quantum}. We have an adjusted $c' = \log \log n$, following Fannes' inequality.

Let $\Pi_m\in\mathcal B(\mathcal F_L^{\otimes n})$ be a $\delta$-typical projector corresponding to message $m$ with $\tr(\Pi_m\rho_{\alpha_m}) \geq 1-\eta$. 

Each message pair $(m_1,m_2)$ is associated with a total of $M_1 + M_2$ bins: $M_1$ bins ${\mathcal{V}^{(1)}_{m_1}}, m_1 \in [M_1]$ for receiver 1 and $M_2$ bins ${\mathcal{V}^{(2)}_{m_2}}, m_2 \in [M_2]$ for receiver 2.
Each bin $\mathcal{V}^{(1)}_{m_1}$ and $\mathcal{V}^{(2)}_{m_2}$, for receiver $i$ respectively, is randomly populated. The expected bin sizes are $|\mathcal{V}^{(i)}_{m_i}| \approx e^{n\Tilde{R}_i}$. Then for every $i \in \{1,2\}$ and $m_i \in [M_i]$ it is decided to include $v$ in the bin $\mathcal{V}_{m_i}^{(i)}$ with probability $\Pr(v\in\mathcal{V}_{m_i}^{(i)}) = e^{-n (R_P - \Tilde{R}_i)}$.
The sender then transmits the message pair $(m_1,m_2)$ from a common bin $\mathcal{V}_{m_1,m_2} \in \mathcal{V}^{(1)}_{m_1} \cap \mathcal{V}^{(2)}_{m_2}$ to ensure that both receivers can successfully identify their respective messages. If the intersection is empty, a default codeword will be sent.

Each receiver performs the measurement defined by Kraus operators $T_L,|0\rangle\langle L|,|0\rangle\langle L+1|\ldots$, thereby realizing the channel $\mathcal T_L$, followed by binary hypothesis testing with POVMs
\begin{align}
\mathbf{\Pi}_{B_1^n}^{m_1} = \{\mathbf{1}-\Pi_{m_1}^{(1)}, \Pi_{m_1}^{(1)} \}, \quad  
\mathbf{\Pi}_{B_2^n}^{m_2} = \{\mathbf{1}-\Pi_{m_2}^{(2)}, \Pi_{m_2}^{(2)} \}.
\end{align}  
As mentioned, the receiver accepts the message if measurement outcome $s_{m_i} \approx 1$ and rejects it otherwise. In particular, if the outcome of the first measurement is not $L$, then the message is rejected. Let $\tilde{\Pi}_{m_i}^{(i)}:=\Pi_{m_i}^{(i)}T_L$.

    Then, we can analyse the missed identification error. This happens when the POVM $\Pi^{(i)}_{m_i}$ fails to detect message $m_i$. For receiver 1, this happens with probability
    \begin{align}
        \min_{1\leq m_1\leq M_1} \tr\left(\tilde{\Pi}^{(1)}_{m_1}\otimes_{i=1}^n\mathcal T_L(S_{N_1}(\sqrt{\tau_1}\alpha_{m_1,i})) \right) &\geq 1-\epsilon.
    \end{align}
where $\epsilon = \eta+2^{1-L/2}$. 

By equation \eqref{eq: gentleopTL}, we then get for the missed ID error
\begin{align}
e_{1,m_1}^{(1)} \leq \lambda_1 := \eta + 2^{1 - L(n)/2} = \eta + \frac{2}{\sqrt{n}}.
\end{align}
As $n\to\infty$, it follows $e_{1,m_1}^{(1)} \to 0$.
Similarly, we can show $e_{1,m_2}^{(2)} \leq \lambda_2$, such that $e_{1,m_2}^{(2)} \to 0$ as $n\to\infty$.

A false identification error occurs when receiver $i$ identifies message $ m'_i$ when actually $m_i$ was sent with $ m'_i \neq m_i$. This can happen in two scenarios: (1) When the codeword bins $\mathcal{V}_{m_i}^{(i)}$ and 
$\mathcal{V}_{m'_i}^{(i)}$, assigned to messages $m_i$ and $m'_i$ respectively, overlap. (2) When a codeword $v \in \mathcal{V}_{m_i}^{(i)}$ is incorrectly accepted by the decoder associated with $ m'_i$, i.e., the corresponding POVM $\Pi_{m'_i}^{(i)}$ produces a positive outcome even though the transmitted codeword belongs to $\mathcal{V}_{m_i}^{(i)}$.

We will first analyse the first scenario.  
\begin{align}
    \Pr(\text{overlap}) = \frac{|\mathcal{V}_{m_i}^{(i)} \cap \mathcal{V}_{m'_i}^{(i)}|}{\mathcal{V}_{m_i}^{(i)}},
\end{align}
gives the probability of overlapping bins.
Lemma \ref{lem:lemma5} gives bounds on the expected bin sizes, e.g., the overlap is bounded by $|\mathcal{V}_{m_i}^{(i)} \cap \mathcal{V}_{m'_i}^{(i)}| < 2\delta_n e^{n\Tilde{R}_i}$, with $\delta_n = e^{-n\mu/2}$ and $0 < \mu < R_P - \Tilde{R}_i$. This gives us the bound
\begin{align}
\frac{|\mathcal{V}_{m_i}^{(i)} \cap \mathcal{V}_{m'_i}^{(i)}|}{|\mathcal{V}_{m_i}^{(i)}|} < \frac{2\delta_n}{1-\delta_n} < \delta_n.
\end{align}

We have to analyse the quantum hypothesis testing error for the second scenario. 

We use the POVM $\tilde{\Pi}_{m_i'}^{(i)}:= \Pi_{m_i'}^{(i)}T_L$, which leads to the probability of falsely identifying message $m'_i$ when $m_i$ was sent is given by
\begin{align}
    \Pr(&\text{false positive}) \nonumber\\
   &= \frac{1}{|\mathcal{V}^{(i)}_{m_i} \cap (\mathcal{V}_{m'_i}^{(i)})^c|}\nonumber \\
   &\sum_{v\in \mathcal{V}^{(i)}_{m_i} \cap (\mathcal{V}_{m'_i}^{(i)})^c} \Pr(\exists v' \in \mathcal{V}^{(i)}_{m'_i}: \tr(\tilde{\Pi}_{m_i'}^{(i)}\rho_{\alpha_m}) > \eta),
\end{align}
where $\eta$ is some threshold for the quantum hypothesis test.
We apply the union bound,
\begin{align}
   \Pr(\exists v' \in \mathcal{V}^{(i)}_{m'_i}:& \tr(\tilde{\Pi}_{m_i'}^{(i)}\rho_{\alpha_m}) > \eta) \nonumber\\
   &\leq \sum_{\alpha_{m'} \in \mathcal{V}_{m'}^{(i)}} \tr(\tilde{\Pi}_{m_i'}^{(i)}\rho_{\alpha_m})\\
  & \leq 2^{-n(I(X;B_i)_\rho -  \delta)} \label{eq:typicality},
\end{align}
Note that $ I(X; B_i)_\rho = \chi(\{p(x), \rho_x^{B_i}\}) $, where $ \rho_x^{B_i} $ is the output state at receiver $i$ when the input $x \in \mathbf{A}$ is sent and $p(x)$ is the input distribution and $\chi(\{p(x), \rho_x^{B_i}\})$ is the Holevo quantity. This follows the standard identity between mutual information and Holevo information for classical-quantum states \cite[Chapter 11]{wilde2013quantum}. Eq. \eqref{eq:typicality} is derived from typicality properties. 

So, we have the total false identification error, which consists of two terms: the probability of the bins for $m_i$ and $ m'_i$ overlapping and the quantum hypothesis testing error. 
\begin{align}
    e_{2,m_i}^{(i)} \leq e^{-n\mu/2} + e^{-n(I(X;B_i)_\rho -\tilde{R}_i -\delta)}.
\end{align}
 The total false identification error decays exponentially in $n$ provided that $\Tilde{R}_i < I(X; B_i)_\rho$, which is true by the design of the code.

All error probabilities decay exponentially as $n\to\infty$, concluding the proof.
\end{proof}

\begin{proof}[Proof of Theorem~\ref{thm:main}]
    By Lemma \ref{lem:approximation-lemma}, we know that the Gaussian ensemble $G(0,E)$ can be approximated by a sequence of discrete alphabets $\mathbf{A}$ such that:
    \begin{equation}
        |\chi(\{p_x(\alpha),\mathcal N_{\tau,E}(\alpha)\}_{\alpha\in \mathbf{A}}) - g(\tau E) + g(\tau E + N)|\leq \epsilon_x,
    \end{equation}
    where $\lim_{x \to \infty} \epsilon_x = 0$.

    Applying this approximation to both receivers, we obtain the following inequalities for the achievable identification rates:
    \begin{align}
        \left| R_1 + \left( g(\tau_1 E + N_1) - g(N_1 ) \right) \right| \leq \epsilon_{x,1}, \\
        \left| R_2 + \left( g(\tau_2 E +  N_2) - g( N_2) \right) \right| \leq \epsilon_{x,2}.
    \end{align}
    Since $\lim_{x\to\infty}\epsilon_{x, i} = 0$ for $i \in \{1,2\}$, the identification rates for the discrete alphabets $\mathbf{A}_x$, $x\in\mathbb N$, converge to the identification rates for the continuous Gaussian ensemble. Thus, we conclude that:
    \begin{equation}
        \lim_{x \to \infty} \mathcal{R}_{\text{ID}, \mathbf{A}_x}^{\text{noisy}}(\mathcal{N}) = \mathcal{R}_{\text{ID}}^{\text{noisy}}(\mathcal{N}).
    \end{equation}
    This completes the proof of Theorem \ref{thm:main}.
\end{proof}

\section{Discussion}

We have demonstrated the achievable rate regions for identification over noisy bosonic broadcast channels under power constraints. The key technique used to extend results from finite-dimensional systems to infinite-dimensional ones involved first approximating the infinite sender alphabet with a discrete subset and then introducing a non-destructive energy measurement at the receiver as a mathematical tool. 

This work extends previous studies on quantum identification to noisy bosonic broadcast channels. While prior results focused on finite-dimensional quantum systems, we bridge the gap between finite-dimensional quantum information theory and practical optical communication. Our analysis shows that the achievable identification rates can be characterised using the Holevo quantity.

Furthermore, our results indicate that quantum hypothesis testing remains a valuable method for analysing identification capacity, even under infinite-dimensional constraints.

\section{Conclusion and Future Work}

We have analysed identification over noisy bosonic broadcast channels and established achievable rate regions under power constraints. Our findings demonstrate that the identification error probabilities remain bounded even when considering an infinite sender alphabet. By leveraging coherent states and quantum hypothesis testing, we provide a framework for understanding identification in noisy bosonic communication systems.

Our research also raises several open questions. While coherent states are an effective encoding strategy, investigating alternative encodings, such as squeezed states, might provide new insights. 

From a practical perspective, experimental validation of our theoretical findings will be essential for implementing identification-based communication in real-world quantum networks.

\section*{Acknowledgements}
The authors acknowledge the financial support by the Federal Ministry of Education and Research of Germany in the programme of "Souver\" an. Digital. Vernetzt.". Joint project 6G-life, project identification number: 16KISK002, and further under grant numbers 16KISQ093, 16KISR026, 16KISQ077. Further funding was received from the DFG in the Emmy-Noether program under grant number NO 1129/2-1, and by the Bavarian state government via the 6GQT and NeQuS projects. Support was provided by the Munich Center for Quantum Science and Technology (MCQST).
The research is part of the Munich Quantum Valley, supported by the Bavarian state government with funds from the Hightech Agenda Bayern Plus.

\bibliographystyle{IEEEtran}
\bibliography{references}

\end{document}